\def\ps@headings{%
\def\@oddhead{\mbox{}\scriptsize\rightmark \hfil \thepage}%
\def\@evenhead{\scriptsize\thepage \hfil \leftmark\mbox{}}%
\def\@oddfoot{}%
\def\@evenfoot{}}
\newtheorem{theorem}{Theorem}
\newtheorem{defn}{Definition}
\newtheorem{example}{Example}
\newcommand{\Esym}{\mathrm{E}}
\newcommand{\E}[1]{\Esym\left[#1\right]}
\begin{document}
%
\title{Measuring the likelihood of models for network evolution}

\author{
\IEEEauthorblockN{Richard G. Clegg, Raul Landa}
\IEEEauthorblockA{Dept of Elec. Eng.\\
University College London\\
London, UK\\
Email: richard@richardclegg.org,\\
Email: rlanda@ee.ucl.ac.uk}
\and
\IEEEauthorblockN{Hamed Haddadi}
\IEEEauthorblockA{
Max Planck Institute  for \\
Software Systems (MPI-SWS) \\
Kaiserslautern/Saarbr\"ucken Germany\\
Email : hamed@mpi-sws.org
}
\and
\IEEEauthorblockN{Miguel Rio}
\IEEEauthorblockA{Dept of Elec. Eng.\\
University College London\\
London, UK\\
Email: m.rio@ee.ucl.ac.uk}
}

\maketitle

\begin{abstract}
Many researchers have hypothesised models which explain the evolution of
the topology of a target network.  The framework described in this paper
gives the likelihood that the target network arose from the hypothesised
model.  This allows rival hypothesised models to be compared for their
ability to explain the target network.  A null model (of random evolution)
is proposed as a baseline for comparison.  The framework also considers models
made from linear combinations of model components.  A method is given for
the automatic optimisation of component weights.  The framework is tested
on simulated networks with known parameters and also on real data.
\end{abstract}

\section{Introduction}
\label{sec:intro}
The field of modelling graph topologies (and in particular the topology 
of the Internet) has generated a huge degree of research interest 
in recent years (see \cite[chapter 3]{handbook} for a review of the subject
and \cite{hamedsurvey}
for an Internet topology perspective).  This paper introduces 
FETA (Framework for Evolving Topology Analysis) which
can be used to assess
potential underlying models for any network where information about the network
evolution is available.  Previously, many researchers have fitted 
probabilistic topology models by growing candidate models and 
assessing how well their model fitted against a selection of statistics 
made on a snapshot of the real network.  The
FETA approach, by contrast, uses a single statistic to get a 
rigorous estimate for the likelihood of a model based upon the dynamic 
evolution of the network.  This paper concentrates on results from artificial
models proving the framework reproduces known models.  A companion paper
\cite{feta} reports on results from five real networks but does not 
present the artificial test data given here.

It has been known for some time that a number of networks follow an
approximate
power law in their degree distribution.  Such networks
include the Internet 
Autonomous System (AS) topology, world wide web,
co-authorship networks, sexual contact networks, email, networks of actors, networks from biology and many others 
(many references are in \cite[table 3.1]{handbook}).  Researchers
have attempted to grow artificial versions of such networks with
models which assign connection probabilities to existing nodes based upon
the graph topology.  Often surprisingly simple models replicate many
features of real networks, such as power laws. The celebrated 
Barab\'asi--Albert (BA) model \cite{ba} provides
an explanation for these in terms of a ``preferential
attachment" model (the probability of connecting to a node is exactly
proportional to its degree).

Further models have given slightly different probabilities and
slightly different ways of connecting nodes to better match
the statistics of real graphs
\cite{ba,integrated,ba2,bu,zhou2004}.  These models are usually 
assessed
by growing artificial networks and measuring several
representative statistics to compare with the real target network.  
A few models work differently, for example ORBIS \cite{ORBIS}
does not ``grow" a network by link addition but instead ``rescales" it.
Willinger et al \cite{Willinger2002} called for a ``closing of the loop"
with a verification stage which checks how well the proposed model
fits the target network.  FETA addresses this validation problem.  
The FETA procedure evaluates the dynamic
evolution of a network, not a static snapshot.  It directly estimates 
a rigorous likelihood rather than attempting to find several summary 
statistics and this likelihood is estimated directly from the network itself
rather than by growing and measuring an artificial network using the
model to be tested.

\section{Evaluation and optimisation framework}
\label{sec:framework}
Let $G$ be some graph which evolves in time.  Let $G_t$ be the state
of this graph at some step of evolution, $t$.
Consider a model for network evolution as consisting of two separate 
(but interconnected) models.  The {\em outer model\/} 
selects the operation which transforms the graph between two steps.
The {\em inner model\/} chooses the entity for
that operation.  The operation and the entity together define
the transition from $G_{i-1}$ to $G_i$.
Both the outer and inner models may depend on the state of
the graph $G_i$ on the step of the evolution $i$ and possibly on
exogenous parameters.
Outer model operations might be the following:
\begin{enumerate}
\item Add a new node and connect it to an existing node.
\item Connect the newest node to an existing node.
\item Connect two existing nodes.
\item Delete an existing connection.
\item Delete an existing node and its connections.
\end{enumerate}
These outer models work with inner models which select either nodes or
edges for the operation.  The inner model assigns probabilities to each node
(operations 1, 2 and 5)
or edge (operations 3 and 4)\footnote{Note 
that the reason ``add a new node" is
not considered on its own is to confine the study here   
to connected graphs.}.
There may be a different inner model for each
outer model operation.  The outer model might be adapted further
if the known graph
data can include unconnected (degree zero) nodes, if graphs can
be unconnected and so on.  The focus of FETA is the inner model
and the outer model is not discussed here.  

\begin{example}
The BA model \cite{ba} 
has a simple outer model which performs 
step 1) then step 2) twice (a new node connects to exactly
three existing nodes). The inner model, known as preferential
attachment, assigns a probability to each node exactly proportional to its
degree.  This inner model is referred to in this paper as $\theta_d$.
The positive feedback preference (PFP) model 
\cite{zhou2004}, uses a parameterised outer model involving several
connections and an inner model
which assigns node probabilities where the probability of selecting a node
with degree $d$ is proportional to
$d^{1 + \delta \log_{10} (d)}$ where $\delta$ is a parameter. 
\end{example}

\subsection{Evaluating inner model likelihood}
\label{sec:eval}

Let $G_0$ be the graph at the first step of evolution observed
(this need not be right at the start of the evolution of the graph).
Assume that the state of the graph is observed until some step $G_t$.
The graph evolves between step $G_{i-1}$ and $G_i$ according to an outer
and inner model.  Each step involves the addition of one edge.
For simplicity of explanation consider the outer model
to consist only of the two operations:
\begin{enumerate}
\item add a new node and connect it to an existing node $N_i$; or
\item connect the newest node to an existing node $N_i$.
\end{enumerate}
The inner model $\theta$ assigns probabilities to the
existing nodes at a given step.  Given the above outer model, 
from $G_{i-1}$ and $G_i$ the node $N_i$ chosen by the inner 
model can be inferred.  Call the set of all observed choices
$C = (N_1, \ldots,N_t)$.

\begin{defn}
An inner model $\theta$ is a map which at every choice stage
$j$ maps a node $i$ 
to a probability $p_j(i|\theta)$.  A model $\theta$ is a 
{\em valid model\/} if the sum over all nodes is one
$\sum_i p_j(i|\theta)=1$.
\end{defn}

\begin{theorem}\label{thm}
Let $C=(N_1,\ldots,N_t)$ be the observed node choices at steps $1,\ldots,t$ of
the evolution of the graph $G$.  
Let $\theta$ be some hypothesised inner model which assigns
a probability $p_j(i|\theta)$ to node $i$ at step $j$.  The likelihood of
the observed $C$ given $\theta$ is
$$
L(C|\theta) = \prod_{j=1}^t p_j(N_j|\theta).
$$
\end{theorem}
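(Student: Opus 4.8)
The plan is to read $L(C|\theta)$ as the joint probability that the model $\theta$ produces exactly the observed sequence of choices, and then to factor that joint probability by the chain rule of conditional probability. Writing $L(C|\theta) = \Prob{N_1,\ldots,N_t \mid \theta}$, the chain rule gives
$$
\Prob{N_1,\ldots,N_t \mid \theta} = \prod_{j=1}^t \Prob{N_j \mid N_1,\ldots,N_{j-1},\theta},
$$
so the entire argument reduces to identifying each conditional factor with the single-step model probability $p_j(N_j|\theta)$.

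First I would make precise the deterministic dependence of the graph state on the history of choices. Given the fixed initial graph $G_0$ together with the known outer-model operation applied at each step, the intermediate graph $G_{j-1}$ is a deterministic function of the earlier choices $N_1,\ldots,N_{j-1}$. Hence conditioning on the prior choices is equivalent to conditioning on the realised graph state $G_{j-1}$ at which the $j$-th selection is made.

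Second I would invoke the definition of the inner model: by construction $p_j(\cdot|\theta)$ is the distribution over nodes assigned at step $j$, and this distribution depends on the past only through the current graph state. Since that state is fixed once $N_1,\ldots,N_{j-1}$ are fixed, we obtain
$$
\Prob{N_j \mid N_1,\ldots,N_{j-1},\theta} = p_j(N_j|\theta),
$$
and substituting into the chain-rule product yields the claimed formula. The validity condition $\sum_i p_j(i|\theta)=1$ guarantees each factor is a genuine probability, so the product is a well-defined likelihood.

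The step I expect to be the main obstacle is the second one: justifying that $p_j$ really captures the full conditional law of the $j$-th choice, with no residual dependence on earlier choices beyond what is encoded in $G_{j-1}$. This is in essence a Markov property for the evolving graph, and it rests on the modelling assumption --- implicit in the setup --- that the inner model at each step sees the history only through the present graph. Once that assumption is stated explicitly, the remaining manipulations are routine.
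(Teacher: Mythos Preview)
Your proposal is correct and follows the same route as the paper: factor the joint likelihood into per-step pieces and identify each piece with the model probability $p_j(N_j|\theta)$. The paper's own proof is considerably terser --- it simply asserts $L(C|\theta) = \prod_{j=1}^t L(C_j|\theta)$ and then observes that $L(C_j|\theta) = p_j(N_j|\theta)$ by definition of the inner model --- whereas you justify the factorisation via the chain rule and make explicit the Markov-type assumption (that the inner model depends on the past only through the current graph state $G_{j-1}$) needed to collapse each conditional factor. What you flag as the ``main obstacle'' is exactly the modelling assumption the paper leaves implicit, so your version is, if anything, the more careful of the two.
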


\begin{proof}
If $L(C_j|\theta)$ is the likelihood of the $j$th choice given model $\theta$
then $L(C|\theta) = \prod_{j=1}^t L(C_j|\theta)$.
Given $p_j(N_j|\theta)$ is the probability
model $\theta$ assigns to node $N_j$ at step $j$, therefore it is also
the likelihood of choice $N_j$ at step $j$ given model $\theta$.  The
theorem follows.
\end{proof}

If two inner models $\theta$ and $\theta'$ are hypothesised to
explain the node choices $C$ arising from observations of a graph
$G_0,\ldots,G_t$ and a given outer model, then the one with the
higher likelihood is to be 
preferred\footnote{A model with fewer parameters will
sometimes be preferred if the gain in likelihood is small
or the number of parameters added is large\cite{AIC} --
the extreme case of this is the saturated model $\theta_s$.}.  
In practice, for even moderate
sized graphs, this likelihood will be beyond the computational 
accuracy of most programming languages and the log likelihood
$l(C|\theta) = \log(L(C|\theta))$ is more useful.  

A common statistical measure is the deviance 
$D = -2 l(C|\theta)$.  (The deviance is usually defined with respect
to a ``saturated model" -- in this case the saturated model 
$\theta_s$ is the model which has $p_j(C_j|\theta_s) = 1$ for
all $j \in 1,\ldots,t$ and hence has $l(C|\theta_s) = 0$.  The saturated model
$\theta_s$ has likelihood one but is useless for anything except
exactly reproducing $G_0, \ldots, G_t$).

\begin{defn}
Let $\theta_0$ be the {\em null model}.  Here, an appropriate
null model is the one which assigns equal probability
to all nodes in the choice set (the random model).  
The choice set is either the set
of  all nodes or, if a simple graph is desired, the set of all
nodes to which the new node does not already connect. 
\end{defn}

The null model allows the assessment of the null deviance 
$D_0 = -2 (l(C|\theta) - l (C| \theta_0))$.  However, both
$D$ and $D_0$ depend heavily on the size of $t$ (the number
of choices made).  A more useful measure created for this
situation is now given.

\begin{defn}\label{defn:pclr}
Let $\theta$ be some inner model hypothesis for the set
of node choices $C = (N_1, \ldots, N_t)$.  
Let $\theta_A$ be some rival model to
compare $\theta$ with.  The {\em per choice likelihood ratio\/}
with $\theta_A$, $c_A$, is the likelihood ratio normalised
by $t$ the number of choices.  It is given by
$$
c_A = \left [\frac{L(C|\theta)}{L(C|\theta_A)} \right]^{1/t}
= \exp \left[ \frac{l(C|\theta) - l(C|\theta_A)}{t} \right]. 
$$
\end{defn}

A value $c_A > 1$ indicates that $\theta$ is a better explanatory
model for the choice set $C$ than $\theta_A$ and
$c_A < 1$ indicates it is worse.
Particularly useful is $c_0$ the {\em per choice likelihood ratio
relative to the null model\/}.  
Note that for a fixed $C$, given the $c_0$ statistic 
for two models  $\theta$ and $\theta_A$ 
then $c_A$ can be shown to be the
ratio of the former over the latter.

In summary, the likelihood $L(C|\theta)$ gives the absolute
likelihood of a given model $\theta$ producing the choice set
$C$ arising from a set of graphs $G_0, \ldots, G_t$.  However,
the per choice likelihood ratio produces a result on a more
comprehensible scale.

\subsection{Fitting linear combinations of model components}
\label{sec:linear}

An inner model $\theta$ can be constructed from
a linear combination of other inner models.  
Let $\theta_1$, $\theta_2$,
$\ldots$ be probability models.  A combined model can
now be constructed from component models as follows, $\theta = 
\beta_1 \theta_1 + \beta_2 \theta_2 + \cdots + \beta_N \theta_N$.
The $\beta_i$ are known as the component weights. The model
$\theta$
is a valid model if all $\beta \in (0,1)$ and
$\sum_i \beta_i = 1$. The weights $\beta$ 
that best explain $C$ can be obtained 
using a fitting
procedure from statistics known as Generalised Linear Models
(GLM).

Let 
$
P_j(i) = 
1$ if  $i = N_j,$ and $P_j(i) = 0$ otherwise.
The problem of finding the best model weights
becomes the problem of fitting the GLM,
$
P_j(i) = \beta_1 p_j(i|\theta_1) + \beta_2 p_j(i|\theta_2) + 
\cdots + \varepsilon.
$
A GLM procedure can fit the $\beta$  parameters to
find the combined model $\theta$ which best fits the $P_j(i)$.
This fit is obtained by creating a data point 
for each choice $j$ and
for each node $i$ 
giving information about that node at that
choice time and also the value of $P_j(i)$. 

GLM fitting in a statistical language such as 
R\footnote{\url{http://www.r-project.org/}} 
can be used to find the choice of 
$\beta_i$ which maximises the likelihood of this model.  
This
is equivalent to finding the $\beta_i$ which gives the
maximum likelihood for $\theta$ since for model $\theta$,
the expectation $\E{P_j(i)} = p_j(i|\theta)$.
The fitting procedure estimates for each $\beta_i$, the
value, the error and the statistical significance.

Because this procedure requires one line of data for each node
at each choice then it produces a large amount of data and sampling
is necessary.  As will be seen in section \ref{sec:GLMtest} the method
still recovers parameters accurately.

\subsection{FETA in practice}
\label{sec:feta_general}

For simplicity of
discussion in previous sections, only operations which connected a new
node to a single node were considered.  Using the framework to connect
edges between existing internal nodes requires a small extension.
Since the number of potential edges is roughly the square of the number of 
nodes, it makes sense to decompose the choice of an edge into the
choice of a start node and an end node.  Once a start node is picked,
the choice set for the end node can be constrained to ensure the graph 
remains simple.  The likelihood of adding edge
$(x,y)$ is calculated as the likelihood of choosing node $x$ then node $y$
plus the likelihood of choosing node $y$ then node $x$.  For the
purposes of definition \ref{defn:pclr} an edge counts as two choices (since
definition \ref{defn:pclr} is in terms of node choices).

The outer model could be further generalised by, for example, adding the
possibility of a ``bare" node appearing (a node with no links) if this
event could be observed.  Another extension would be 
adding node or edge deletion operations.
Separate inner models can be fitted to different outer model operations.
For example, in the work on FETA reported in \cite{feta} separate models
are fitted to the outer model operations which connect a single existing 
node to a new node and the outer model operations which connect an
edge between existing internal nodes.  Likelihoods from the two parts
of the inner model can be directly combined by multiplication.

Another practical concern is scalability -- how the likelihood 
computation time increases as graphs become large. 
Tests were run on a 2.66GHz quad core Xeon CPU using the
same codebase for two tasks, one to measure the likelihood of
a target network arising from a given model and the second
to actually create a network.  The number of links created
was varied from 1,000 to 100,000.  While both processes
increased approximately as $O(n^2)$ where $n$ is the number
of links,
the likelihood calculation
is much quicker than the network creation process.
For 100,000 links the likelihood calculation took 53 seconds, the
network creation took 2,600 seconds.  Compared with producing a
test network and measuring it, the FETA approach is extremely efficient.
If the runtime were to become onerous, sampling could be used
as it is in the GLM procedure.  This was not necessary for the results
in this paper.

It is worth briefly noting two points about data requirements.
Firstly, FETA does not require data from the entire history of a network,  
the graph $G_0$ can be any stage of graph construction.  Secondly, for a
sufficiently large graph, knowing the exact order of link arrival 
should not be necessary (this may occur if the graph state is measured
periodically rather than recorded as every node or edge arrives). 
A graph with a large number of nodes will 
not change its topology greatly for a small number of arrivals and therefore
a small reordering of link arrival order should make little difference 
to the model likelihood.  Future work will seek to quantify the inaccuracies  
introduced by this reordering.

\section{Testing the framework}
\label{sec:testing}
The obvious way to test the framework is on simulated data sets
where the underlying inner model is known.  Testing models using
the likelihood procedure from \ref{sec:eval}
is demonstrated in section \ref{sec:likelihoodtest}.  
Optimising models using the GLM procedure in section 
\ref{sec:linear} is done in section \ref{sec:GLMtest}.
A demonstration on real data is described in section 
\ref{sec:realdata}.

Let $d_i$ be the degree of node $i$ and $t_i$ be the triangle
count (the number of triangles, or 3--cycles, the node
is in).
The model components used in the testing are the following:
$\theta_0$ -- the null model (random model) assumes all nodes have equal 
probability $p_i = k_n$;
$\theta_d$ -- the degree model (preferential attachment) 
assumes node probability $p_i = k_d d_i$;
$\theta_t$ -- the triangle model assumes node 
probability $p_i = k_t t_i$;
$\theta_S$ -- the singleton model assumes node 
probability $p_i = k_S$ if $d_i = 1$ and $p_i = 0$ otherwise;
$\theta_D$ -- the doubleton model assumes node 
probability $p_i = k_D$ if $d_i = 2$ and $p_i = 0$ otherwise;
$\theta_R(n)$ -- the ``recent" model where $p_i = k_H$ if a node
was one selected in the last $n$ selections and $p_i = 0$ otherwise
and
$\theta_p^{(\delta)}$ -- the 
PFP model assumes node 
probability $p_i = k_p d_i^{1 + \delta \log_{10}(d_i)}$.
The $k_{\bullet}$ 
are all normalising constants to ensure $\sum_i p_i = 1$.

\subsection{Testing the likelihood framework}
\label{sec:likelihoodtest}

The best way to test the likelihood framework is on simulated
networks with a known underlying inner model.  Test model one
has a simple outer model which creates
a new node and then connects it to exactly three
distinct nodes.  
The inner model $\theta_1$ which
chooses these nodes is $\theta_1 = 0.5 \theta_p(0.05) + 0.5 \theta_t$.
That is, it is 50\% the PFP inner model with $\delta = 0.05$
and 50\% the triangle model.  Naturally, nodes with a high number 
of triangles also have a high degree so these model parameters
are, to some extent, correlated.

An artificial network was grown with 10,000 edges using the
model described above.  Assuming that the model was known
to be of the form $\beta_p \theta_p(\delta) + \beta_t \theta_t$
then, since $\beta_p + \beta_t = 1$ a sweep of the parameters
$\delta$ and $\beta_t$ should give a likelihood surface 
with a maximum at the correct values of $\beta_t$ and $\delta$.
The values tried were all possible combinations of
$\beta_t = (0.1, 0.15, \ldots, 0.85,0.9)$ and
$\delta = (0.01, 0.0125, \ldots, 0.0875,0.09)$.
The likelihood surface produced
is shown in Figure \ref{fig:likelihood}
with contour lines projected below.  As can be seen, the maximum
likelihood is in the correct part of the region
($\beta_t = 0.5$, $\delta= 0.05$).
In fact the highest $c_0$ was with $\delta = 0.0525$ and
$\beta_t = 0.5$, an almost exact recovery of the correct
parameters.

\begin{figure}[ht!]
\begin{center}
\labellist
\pinlabel {$\delta$} at 90 130
\pinlabel {$\beta_t$} at 350 95
\pinlabel {$c_0$} at 100 260
\endlabellist
\includegraphics[width=8cm]{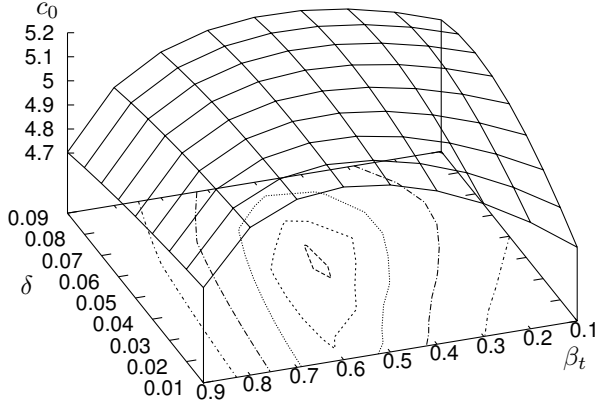}
\caption{A likelihood surface for the model $\theta_1$ with a contour plot
beneath.}
\label{fig:likelihood}
\end{center}
\end{figure}

Test model two
has an  outer model which connects a new 
node to either one or two distinct inner nodes
(equal probability of each).  The
inner model $\theta_2$ is given by
$\theta_2 = 0.25 \theta_0 + 0.25 \theta_t + 0.25 \theta_S +
0.25 \theta_D$.  Again 10,000 edges were generated using
this model.
A few test models with similar parameters to
$\theta_2$ are tested against $\theta_2$. 

\smallskip
\begin{center}
\begin{tabular}{@{\extracolsep{-10pt}}r rrrr r r | l} \hline
Model & & & &&&& $\quad c_0$\\ \hline
$\theta_2=0.25 \theta_0 $ & $+$ & $0.25 \theta_t $ & $+$ & $0.25 \theta_S $ & $+$ &
$0.25 \theta_D$ & \quad 2.45188 \\ \hline
$0.2 \theta_0 $ & $+$ & $0.3 \theta_t $ & $+$ & $0.25 \theta_S $ & $+$
& $0.25 \theta_D$ & \quad 2.43070 \\ 
$0.25 \theta_0 $ & $+$ & $0.25 \theta_t $ & $+$ & $0.3 \theta_S $ & $+$ &
$0.2 \theta_D$ &  \quad 2.43474 \\ 
$0.2 \theta_0 $ & $+$ & $0.25 \theta_t $ & $+$ & $0.3 \theta_S $ & $+$ &
$0.25 \theta_D$ & \quad 2.43549 \\ 
$0.24 \theta_0 $ & $+$ & $0.25 \theta_t $ & $+$ & $0.26 \theta_S $ & $+$ &
$0.25 \theta_D$ & \quad 2.45135 \\ \hline
\end{tabular}
\end{center}

As can be seen, even the final model which has extremely
close parameters produces a slightly lower $c_0$ value. With 
three free parameters in the model, an exhaustive 
state space search could quite time consuming.  If
the network were bigger, or more parameters were 
required in a test (a real network would not have
known model components), a brute-force state space 
search would be intractable.
For models with many parameters
the $c_0$ parameter could be used as a fitness function
for an optimisation procedure such as genetic algorithms.
Alternatively, for linear parameters, the GLM fitting 
from section \ref{sec:linear}
can be used and these tests are performed in the next
section.

\subsection{Testing the parameter optimisation}
\label{sec:GLMtest}

The next stage is to test the GLM fitting procedure described in
section \ref{sec:linear} on artificial models.  This can, in theory,
retrieve parameters from models produced by linear combinations
of model components.  In this section,
statistical significances from the GLM procedure
are quoted at the 10\%, 5\%, 1\%
or 0.1\% levels. 

First tests were performed on 
$\theta_1 = 0.5 \theta_p(0.05) + 0.5 \theta_t$
as described in the previous section.  The test
network again had 10,000 edges.  Sampling 
was used to generate just over 4,000,000 items of data
for the GLM fit.
Fitting $\theta= \beta_p \theta_p(0.05) + \beta_t \theta_t$ gave
the following results.

\smallskip
\begin{center}
\begin{tabular}{l | l l} \hline
Parameter & Estimate  & Significance \\ \hline
$\theta_p(0.05)$ & $0.53 \pm 0.031$ & $0.1\%$ \\
$\theta_t$ & $0.47 \pm 0.031$ & $0.1\%$ \\ \hline
\end{tabular}
\end{center}

The parameters were recovered almost exactly.
However, this assumed
that $\delta$ was known precisely.  If $\delta$ is not known then
the GLM procedure behaves reasonably with incorrect $\delta$.
The table below shows fits of the model with $\delta=0.2$ and
$\delta = 0.01$ -- considerably above and below the correct values.

\smallskip
\begin{center}
\begin{tabular}{l | l l} \hline
Parameter & Estimate  & Significance \\ \hline
$\theta_p(0.2)$ & $0.12 \pm 0.022$ & $0.1\%$ \\
$\theta_t$ & $0.84 \pm 0.021$ & $0.1\%$ \\ \hline
$\theta_p(0.01)$ & $0.43 \pm 0.025$ & $0.1\%$ \\
$\theta_t$ & $0.57 \pm 0.025$ & $0.1\%$ \\ \hline
\end{tabular}
\end{center}

In both cases the model correctly gave statistical significance
to the $\theta_p$ component of the model.  The actual estimates
were not 0.5, nor were them expected to be.  The
true $\delta$ parameter could be found by trying a range of
values within the GLM procedure just as it was with the likelihood
estimator in Figure \ref{fig:likelihood}.  

For realistic scenarios, the true underlying model is not known.  
Thus some
``misspecified" models (models known to be incorrect) were
tried to see whether incorrect components could be identified.  
Thus, the model
$\theta= \beta_d\theta_d + \beta_t \theta_t + \beta_0 \theta_0$ 
which includes extraneous $\theta_d$ (preferential attachment)
and $\theta_0$ (null or random) models.

\smallskip
\begin{center}
\begin{tabular}{l | l l} \hline
Parameter & Estimate  & Significance \\ \hline
$\beta_d$ & $0.46 \pm 0.057$ & $0.1\%$ \\
$\beta_t$ & $0.57 \pm 0.031$ & $0.1\%$ \\ 
$\beta_0$ & $-0.031 \pm 0.032$ & none \\  \hline
\end{tabular}
\end{center}

The $\theta_0$ component has been rejected having both a low value and a
low statistical significance.  The $\theta_d$ model has stayed in,
almost certainly because it has such a strong correspondence with
the $\theta_p(\delta)$ model -- indeed, for $\delta=0$ it is the
same model.

The GLM fitting procedure does not always produce the correct answer,
in particular, when $\theta_d$ and $\theta_p$ are included in
the same fitting procedure problems can occur.  
Fitting $\theta = \theta_d + \theta_p(0.05) + \theta_t$ gives
the following.

\smallskip
\begin{center}
\begin{tabular}{l | l l} \hline
Parameter & Estimate  & Significance \\ \hline
$\beta_d$ & $0.28 \pm 0.085$ & $0.1\%$ \\
$\beta_p(0.05)$ & $0.18 \pm 0.11$ & none \\ 
$\beta_t$ & $0.54 \pm 0.038$ & $0.1\%$ \\  \hline
\end{tabular}
\end{center}

Here the GLM procedure gave an incorrect answer.  The
$\theta_p(\delta)$ model was incorrectly rejected and given no
statistical significance.  This kind of error is common when
$\theta_d$ and $\theta_p(\delta)$ are combined in the same model.
This model gives $c_0 = 5.17$ compared with $c_0 = 5.18$ for the
correct model -- the likelihood still identifies the correct
model even when the GLM procedure fits an incorrect model.

The GLM procedure was next used to recover parameters from
$\theta_2 = 0.25 \theta_0 + 0.25 \theta_t + 0.25 \theta_S +
0.25 \theta_D$.  The test network had 10,000 edges as previously.
Sampling was used to obtain 
just over 3.5 million data points for model fitting.

\smallskip
\begin{center}
\begin{tabular}{l | l l} \hline
Parameter & Estimate  & Significance \\ \hline
$\beta_0$ & $0.23 \pm 0.021$ & $0.1\%$ \\
$\beta_t$ & $0.28 \pm 0.017$ & $0.1\%$ \\ 
$\beta_S$ & $0.24 \pm 0.016$ & $0.1\%$ \\
$\beta_D$ & $0.25 \pm 0.020$ & $0.1\%$  \\  \hline
\end{tabular}
\end{center}

As can be seen, this recovery of parameters was
quite successful, although $\beta_t$
is actually 0.25 and therefore slightly outside
the error range $0.28 \pm 0.017$.  The
next test was to add a spurious model component
$\theta_d$.

\smallskip
\begin{center}
\begin{tabular}{l | l l} \hline
Parameter & Estimate  & Significance \\ \hline
$\beta_0$ & $0.33 \pm 0.059$ & $0.1\%$ \\
$\beta_t$ & $0.29 \pm 0.017$ & $0.1\%$ \\ 
$\beta_S$ & $0.24 \pm 0.016$ & $0.1\%$ \\
$\beta_D$ & $0.23 \pm 0.022$ & $0.1\%$ \\ 
$\beta_d$ & $-0.089 \pm 0.059$ & $5\%$\\  \hline
\end{tabular}
\end{center}

The $\beta_d$ parameter was given a negative
value (which is likely to produce an invalid 
model for the likelihood estimate) 
and the relatively low 
statistical significance also suggests
$\theta_d$ 
should be removed from the model.  An important caveat
exemplified here is that the GLM model is not 
constrained to
produce the $\beta$ parameters in the range $(0,1)$. 
This needs to be considered when analysing model fitting.



In most
circumstances tested, the GLM model performed extremely
well.  When the correct model was tested, the correct results
were obtained and spurious model components were only accepted
if they correlated strongly with genuine model components.
The GLM model is a very useful tool for exploratory data analysis
but the likelihood framework remains the true test of model fit
to data.

\subsection{Tests on real data}
\label{sec:realdata}

Tests  on five different data sets are reported
in \cite{feta}.  Here, for
space reasons, only one network is reported, the RouteViews AS network,
a view of the AS topology collected by the
University of Oregon RouteViews 
project\footnote{\url{http://www.routeviews.org}}.  
The data set gives the growth
of the AS topology from 42,000 edges to over 90,000.  Throughout this section,
it is important to keep in mind the aim of this paper, to test the FETA framework.
The models described here are not claimed to be the best known models
for the network in question.  The PFP model \cite{zhou2004} with its special outer model gets
a closer match to the final network statistics.  The ORBIS model \cite{ORBIS} 
does not model evolution but is very good at matching statistics on a target network.
The model
presented here as ``best" is the best model found using the FETA framework with a simple
outer model.  
The claim being verified in this section is not that this is the best possible model
of the real network but 
that models can be assessed and optimised using the FETA framework without looking at any target statistics other than likelihood. 

Three inner models were compared to the RouteViews AS network.
The outer model was simple -- the choice of operation (add new node, add
link to new node or add inner edge) was exactly that sequence observed in the
real data.
The inner model $\theta_0$ was used 
as a base for comparison.  The other two models were 
a ``pure" PFP model (but without the PFP special outer
model) $\theta_p(0.005)$ and the ``best" 
model found which was $0.81\theta_p(0.014) + 0.17\theta_R(1)$ (PFP + ``recent") 
to connect new nodes
and $0.71\theta_d + 0.22\theta_R(1) + 0.07\theta_S$ (preferential attachment
+ ``recent" + singleton) to connect edges between
existing nodes.  
The PFP model $\theta_p(0.005)$
had $c_0 = 4.81$ and the
``best" model had $c_0 = 8.06$.  
From these results PFP and ``best" should
be a significant improvement on random and ``best" should
be better than PFP.
These modelling results should not be taken as a criticism of PFP as
described in \cite{zhou2004} since the special ``interactive
growth" outer model of that paper was not used (the focus here
is on the inner model).

\begin{figure*}[ht!]
\begin{center}
\includegraphics[width=5.5cm]{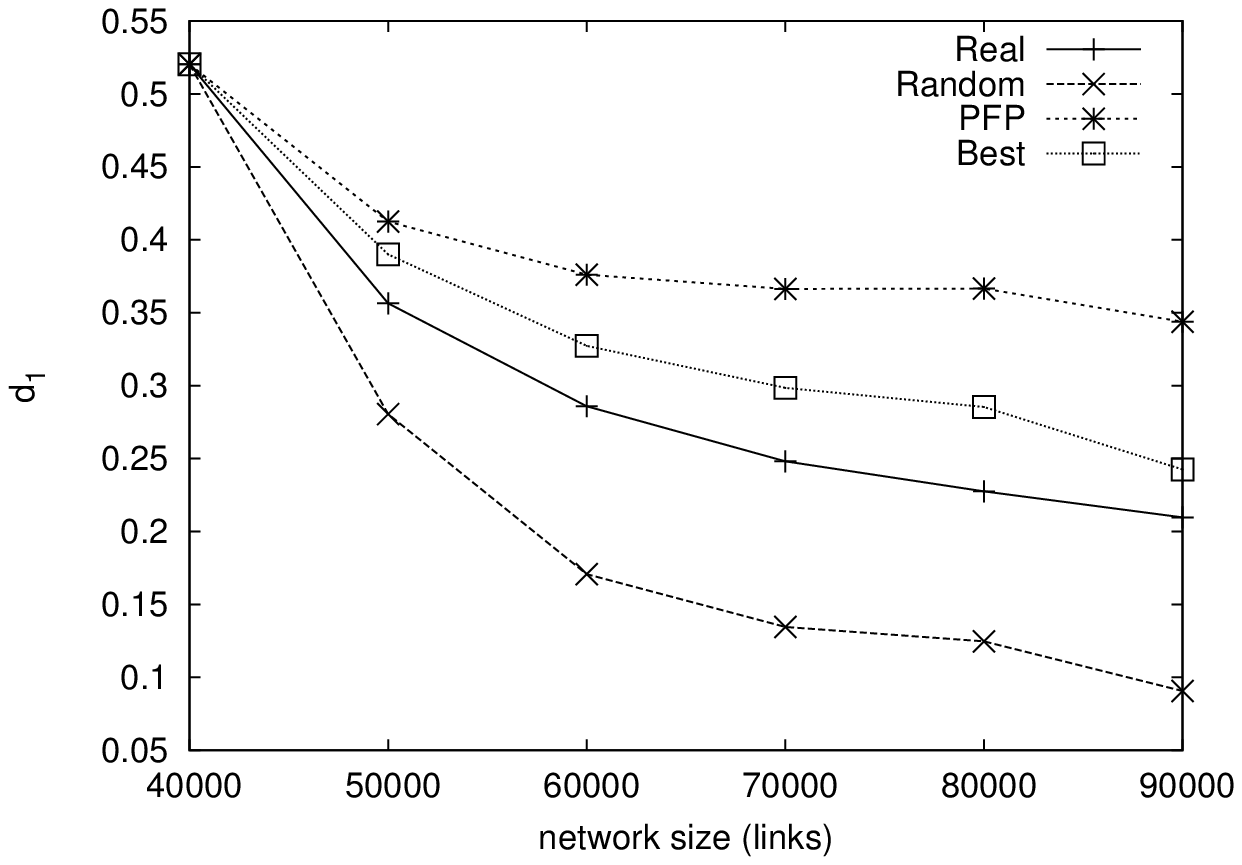}
\includegraphics[width=5.5cm]{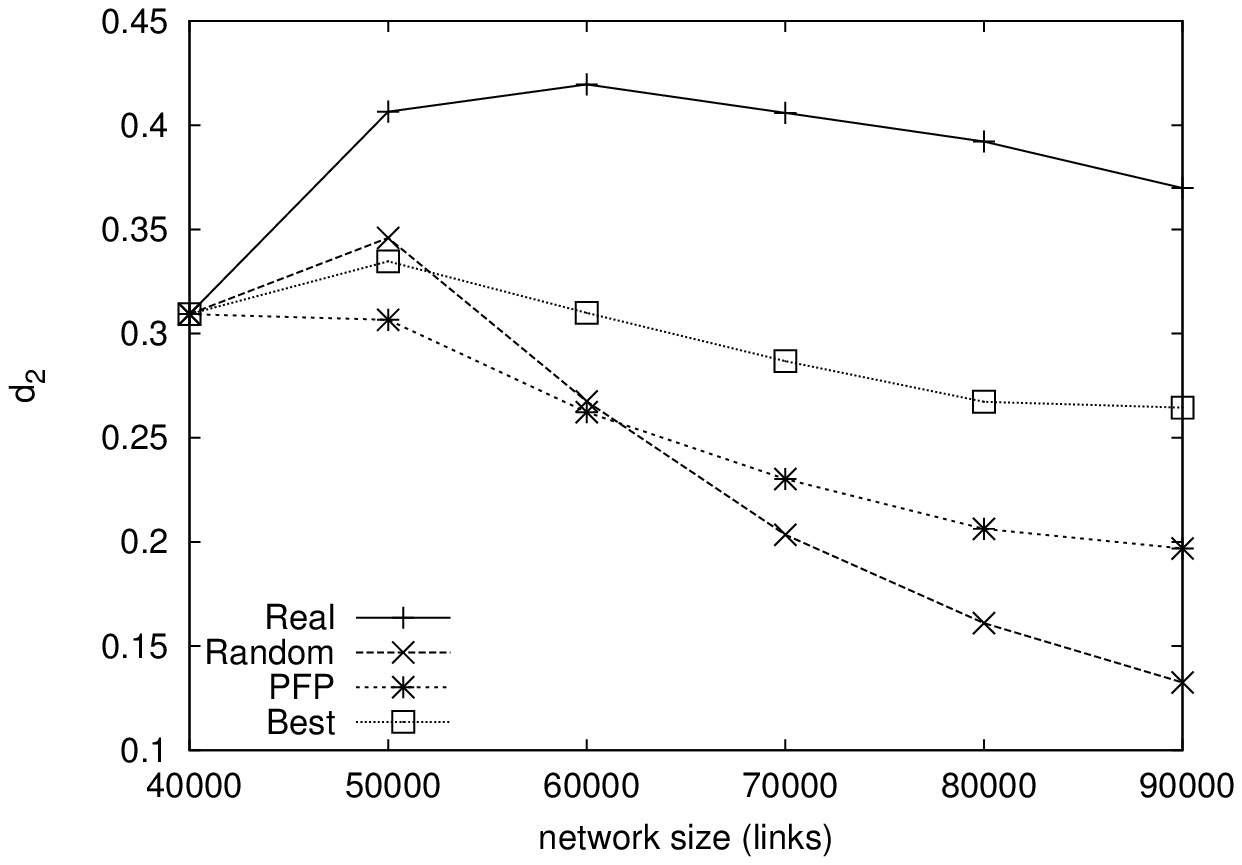}
\includegraphics[width=5.5cm]{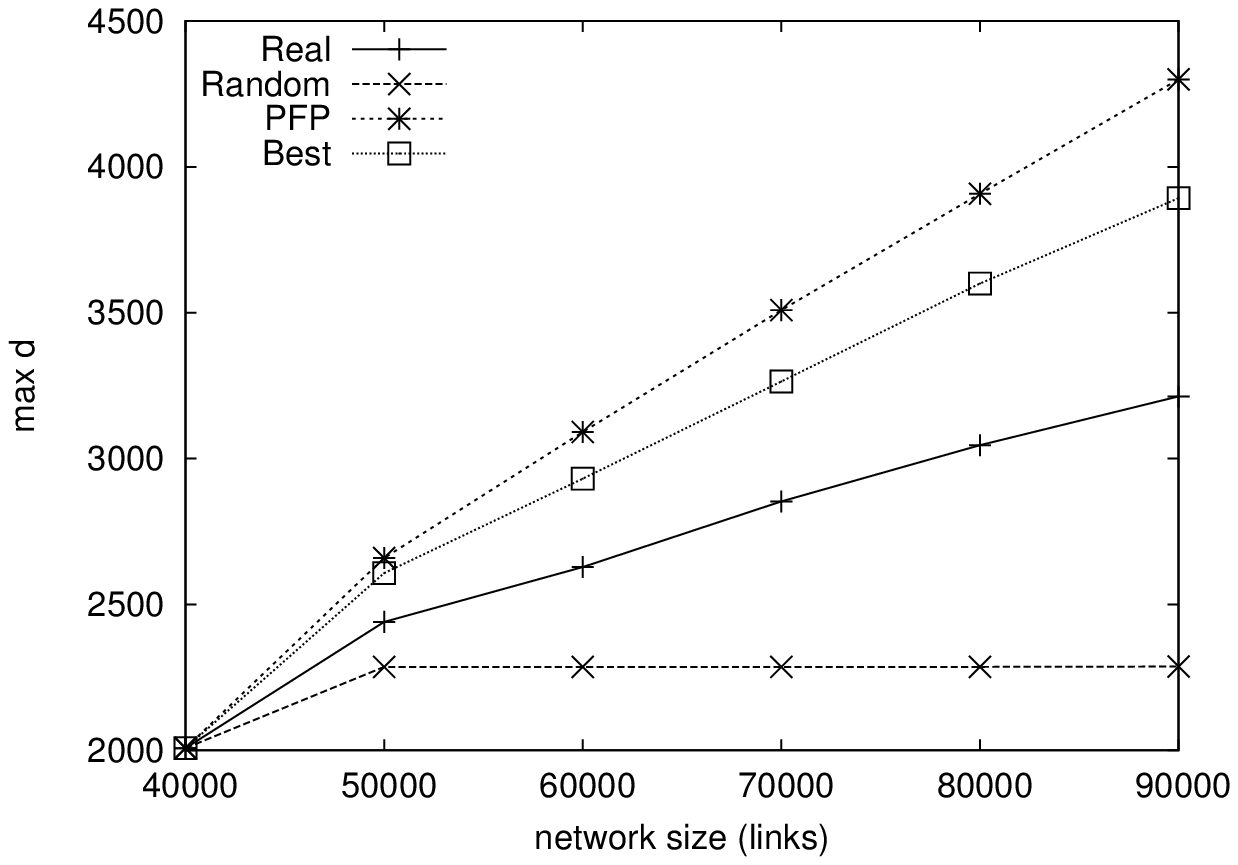}
\caption{The evolution of the $d_1$ (left), $d_2$ (center) and $\max d$ parameters.}
\label{fig:plot_d1}
\end{center}
\end{figure*}

\begin{figure*}[htb!]
\begin{center}
\includegraphics[width=5.5cm]{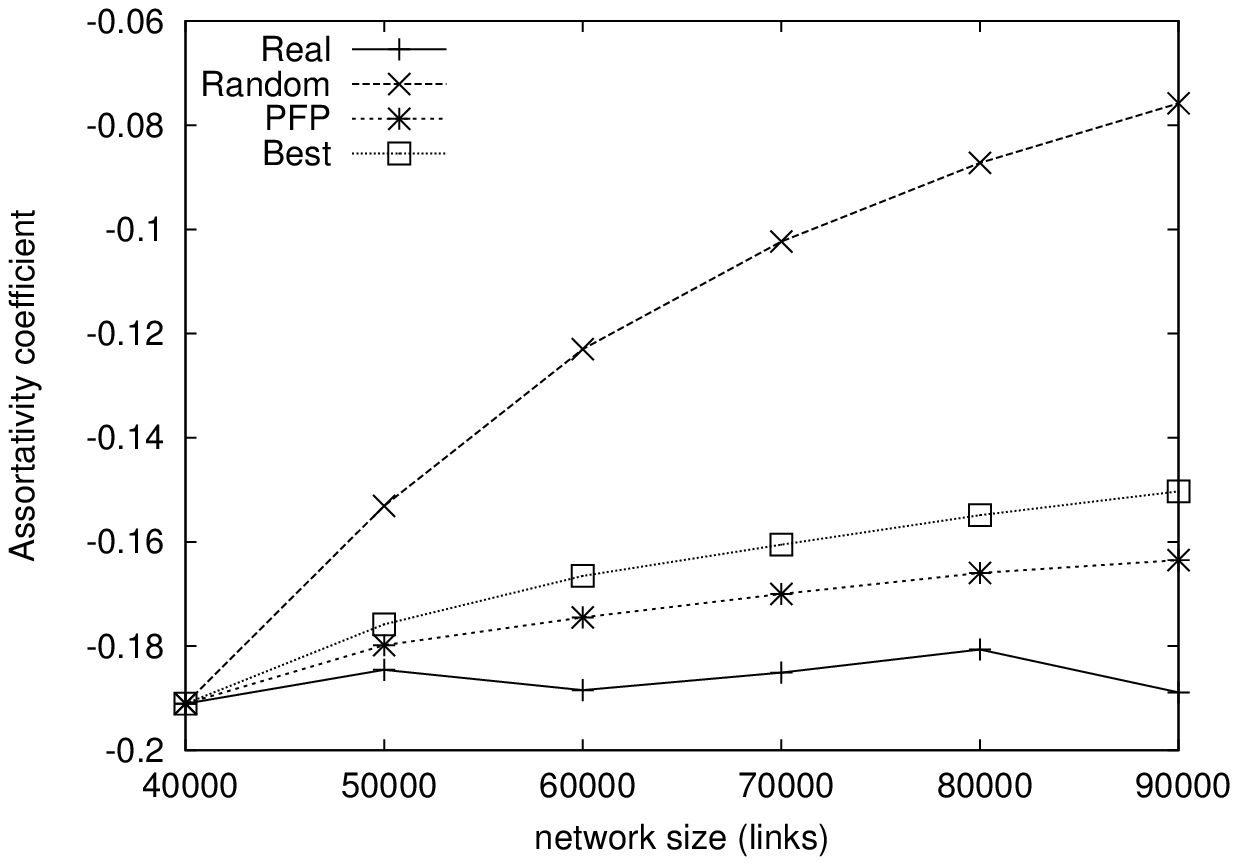}
\includegraphics[width=5.5cm]{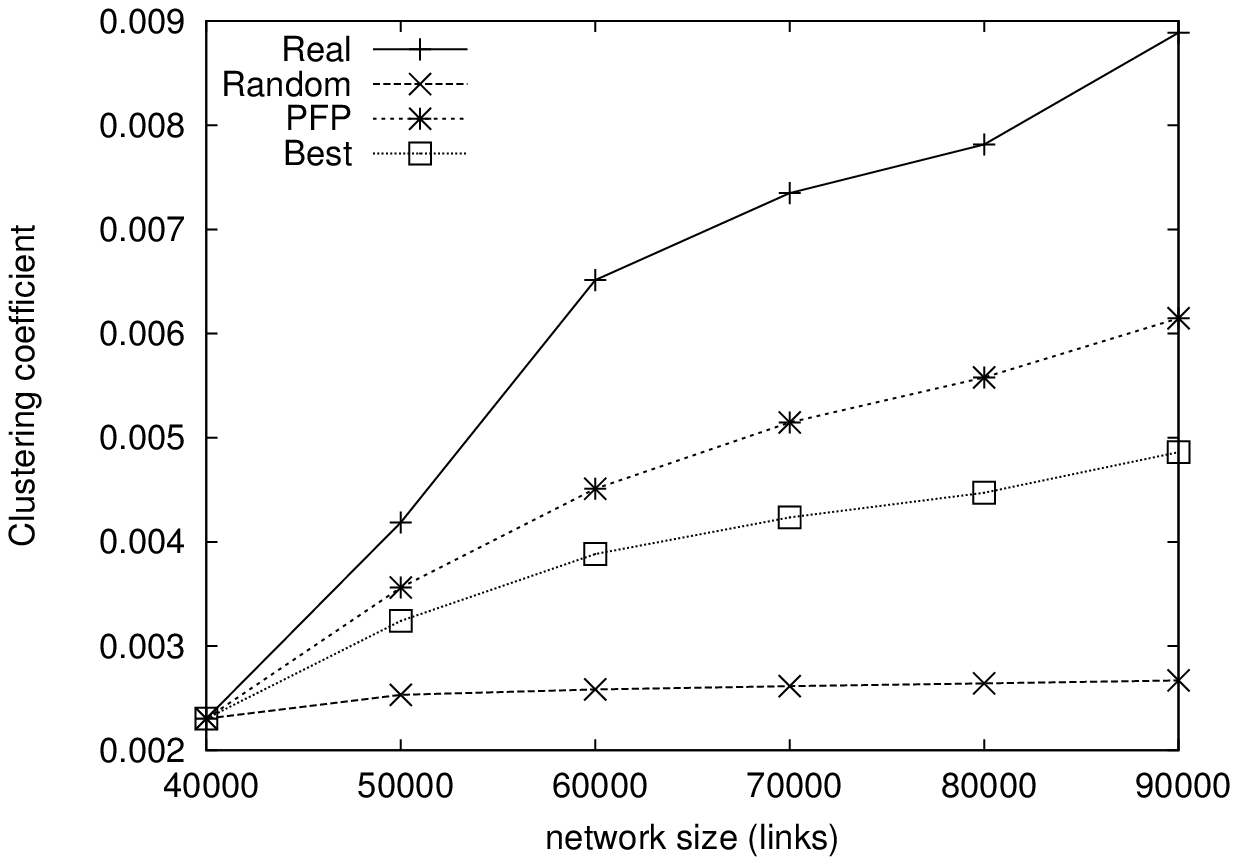}
\includegraphics[width=5.5cm]{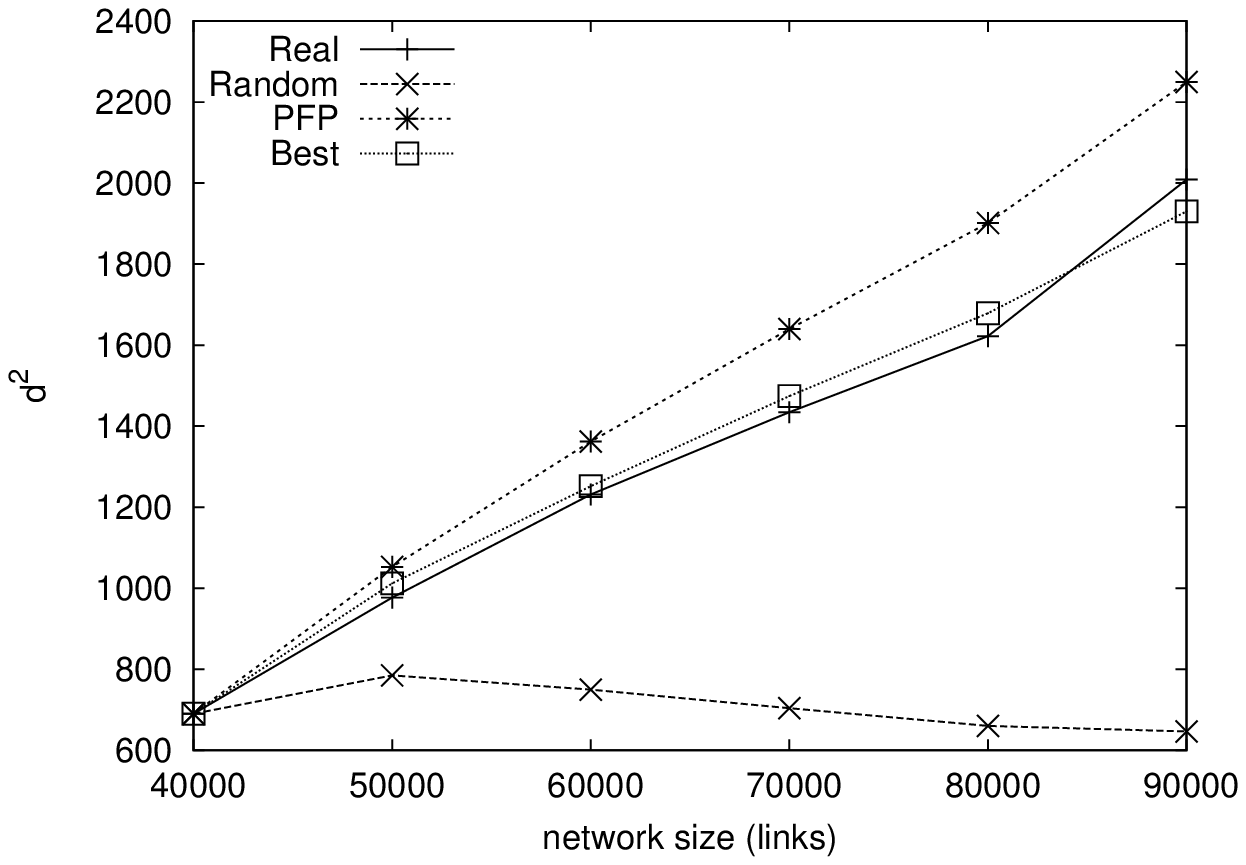}
\caption{The evolution of assortativity coefficient (left), clustering
coefficient (center) and $\overline{d^2}$ (right).}
\label{fig:plot_r}
\end{center}
\end{figure*}

Each model grew a test network from the seed network of 42,000 edges.
The first point in each plot is after edge 40,000 and hence shows
all models to perform the same (since the network is still the seed
network at this point).
Figures \ref{fig:plot_d1} and \ref{fig:plot_r} show the evolution of various
graph statistics for the real network compared with the three models.  
The leftmost point for each is within the seed graph and hence should
always be the same.  The statistics are $d_1$ and $d_2$ the proportion
of nodes of degree one and two, $\max d$ the degree of the highest node,
$\overline{d^2}$ (the mean square node degree), the assortativity coefficient
$r$ and the clustering coefficient $\gamma$.  See \cite{hamedsurvey} for
full descriptions of these statistics.  (Note that $\overline{d}$ is
fixed by the outer model and is an exact match to the real topology).

As mentioned at the start of this section, the claim is not that these
models are a perfect fit to the evolution of the target network but,
instead, that the order in which they fit the target network is that
given by the likelihood estimator:  the
``best" model being better than pure PFP, and both being much better
than random.  The models and the
$c_0$ measures which predicted this were produced
before any artificial topologies were generated and
without reference to the graph statistics plotted in the figures.  
This is a convincing demonstration
that the likelihood measure translates directly into fit to real data
over a number of statistical measures.

For most statistics, the ordering seems correct with ``best"
being closest to real, followed by PFP and then random.  An exception is in
the graphs for $\gamma$ and $r$
where PFP is slightly better than ``best".  However, in $d_1$ and
$\max d$ the PFP model is approximately the same as random,
when we would expect it to be better.  In the case of $\max d$, random predicts unrealistically 
slow growth.  For some statistics, no models given are
close (for reproducing the statistics of a graph snapshot it seems likely that ORBIS, for example, 
might be better).  However, the framework has clearly shown its ability
to assess which model best fits a target graph and this is clearly reflected in these
statistics.

\section{Conclusions}
\label{sec:conc}
The Framework for Evolving Topology Analysis
(FETA) is a useful toolset for investigating growth models 
of networks where evolution information is 
available.  Network growth models were described in terms
of an outer model (which selected the operation to perform on
the graph) and an inner model (which selected the entity
for the operation).  A likelihood statistic was given for
an inner model giving rise to a target network.
The likelihood statistic given is a rigorous and quick to
calculate.  It has been shown to recover the statistics
of a known model from a network grown using that model. 
A method was given for exploring and optimising linear combinations
of model components and this was tested successfully.
The fitting procedure can give insight into what model components
are required to best fit the data.  Models output by the fitting 
procedure can then be assessed precisely using the likelihood
measure.
FETA has been tested on
real data from five networks, one of which
was presented in this paper.  The likelihood
measure was found to be a good predictor of how well a
network grown from a given model would match the
statistics of the real data.  
The models presented here were not perfect
at capturing the evolution of the AS graph.  Different
inner model components would 
be needed to improve this.

Much more
can be achieved with the statistical analysis of network growth.  
A similar likelihood approach could be
applied to the outer model. Inner models which themselves change in time 
would be another improvement.  Models constructed
multiplicatively from components 
($\theta_1^{\beta_1}\theta_2^{\beta_2}\cdots$)
would seem natural than but 
normalisation problems exist.  Network models could be considered which
remove nodes or edges as well as add them and which do not necessarily
remain connected.  Finding new data sets to apply the method
to is also a priority.  Other researchers are encouraged to download
and try the software and 
data\footnote{\url{http://www.richardclegg.org/software/FETA}}.

\bibliographystyle{IEEEtran}
\bibliography{netsci_nsrl_2009}

\begin{thebibliography}{10}
\providecommand{\url}[1]{#1}
\csname url@samestyle\endcsname
\providecommand{\newblock}{\relax}
\providecommand{\bibinfo}[2]{#2}
\providecommand{\BIBentrySTDinterwordspacing}{\spaceskip=0pt\relax}
\providecommand{\BIBentryALTinterwordstretchfactor}{4}
\providecommand{\BIBentryALTinterwordspacing}{\spaceskip=\fontdimen2\font plus
\BIBentryALTinterwordstretchfactor\fontdimen3\font minus
  \fontdimen4\font\relax}
\providecommand{\BIBforeignlanguage}[2]{{%
\expandafter\ifx\csname l@#1\endcsname\relax
\typeout{** WARNING: IEEEtran.bst: No hyphenation pattern has been}%
\typeout{** loaded for the language `#1'. Using the pattern for}%
\typeout{** the default language instead.}%
\else
\language=\csname l@#1\endcsname
\fi
#2}}
\providecommand{\BIBdecl}{\relax}
\BIBdecl

\bibitem{handbook}
S.~Bornholdt and H.~G. Schuster, Eds., \emph{Handbook of Graphs and
  Networks}.\hskip 1em plus 0.5em minus 0.4em\relax Wiley, 2003.

\bibitem{hamedsurvey}
H.~Haddadi, G.~Iannaccone, A.~Moore, R.~Mortier, and M.~Rio, ``Network
  topologies: Inference, modelling and generation,'' \emph{IEEE Comm. Surveys
  and Tutorials}, vol.~10, no.~2, 2008.

\bibitem{feta}
R.~G. Clegg, R.~Landa, U.~Harder, and M.~Rio, ``Evaluating and optimising
  models of network growth,'' 2009, \url{http://arxiv.org/abs/0904.0785}.

\bibitem{ba}
A.~L. Barabasi and R.~Albert, ``Emergence of scaling in random networks,''
  \emph{Science}, vol. 286, no. 5439, pp. 509--512, (1999).

\bibitem{integrated}
P.~Holme, J.~Karlin, and S.~Forrest, ``An integrated model of traffic,
  geography and economy in the internet,'' \emph{SIGCOMM Comput. Commun. Rev.},
  vol.~38, no.~3, pp. 5--16, 2008.

\bibitem{ba2}
R.~Albert and A.-L. Barabasi, ``Topology of evolving networks: local events and
  universality,'' \emph{Physical Review Letters}, vol.~85, p. 5234, 2000.

\bibitem{bu}
T.~Bu and D.~Towsley, ``On distinguishing between {I}nternet power law topology
  generators,'' in \emph{Proceedings of {IEEE INFOCOM}}, New York, NY, Jun.
  2002.

\bibitem{zhou2004}
S.~Zhou and R.~J. Mondrag\'on, ``Accurately modeling the {I}nternet topology,''
  \emph{Phys. Rev. E}, vol.~70, no. 066108, pp. 1--7, 2004.

\bibitem{ORBIS}
P.~Mahadevan, C.~Hubble, D.~Krioukov, B.~Huffaker, and A.~Vahdat, ``Orbis:
  Rescaling degree correlations to generate annotated {I}nternet topologies,''
  in \emph{Proceedings of {ACM SIGCOMM}}, Kyoto, Japan, 2007.

\bibitem{Willinger2002}
W.~Willinger, R.~Govindan, S.~Jamin, V.~Paxson, and S.~Shenker, ``Scaling
  phenomena in the {I}nternet: critically examining criticality,'' in
  \emph{Proceedings of the National Academy of Sciences}, vol.~99, 2002, pp.
  2573--2580.

\bibitem{AIC}
H.~Akaike, ``A new look at statistical model information,'' \emph{IEEE Trans.
  on Auto. Control}, vol.~19, no.~6, pp. 716--723, 1974.

\end{thebibliography}

\end{document}